\newcommand{\upto}{\upharpoonright}
\DeclareMathOperator{\real}{real}
\DeclareMathOperator{\seq}{seq}
\newcommand{\FST}{\mathrm{FST}}
\newcommand{\myset}[2]{ \left\{ #1 \left| \, #2 \right. \right\} }
\newcommand{\ignore}[1]{}
\newcommand{\N}{\mathbb{N}}
\newcommand{\dimfs}{\mathrm{dim}_\mathrm{FS}}
\newcommand{\dimh}{\mathrm{dim}_\mathrm{H}}
\renewcommand{\dim}{{\mathrm{dim}}}
\newcommand{\D}{{\mathcal{D}}}
\newcommand{\K}{{\mathrm{K}}}
\newtheorem{theorem}{Theorem}[section]
\newtheorem{corollary}[theorem]{Corollary}
\newtheorem{claim}[theorem]{Claim}
\newenvironment{definition}
{ {\noindent {\bf Definition.}} } {  }
\newenvironment{example*}[1]
{ {\noindent {\bf Example #1.}} } {  }
\newenvironment{claim*}[1]
{ {\noindent {\bf Claim #1.}} } {  }
\newenvironment{theorem*}
{ {\noindent {\bf Theorem.}} } {  }
\newtheorem{observation}[theorem]{Observation}
\newenvironment{Open}
{ {\noindent {\bf Open question.}} } {  }
\newenvironment{proof}[1][xyzzy]
{
{\noindent {\bf Proof}%
\ifthenelse{\equal{#1}{xyzzy}}{{\bf .}}{~(#1).}} } { \hfill $\Box$ }
\begin{document}

    \title{A point to set principle for finite-state dimension}
    \author{Elvira
Mayordomo\thanks{Departamento de Inform{\'a}tica e Ingenier{\'\i}a
de Sistemas, Instituto de Investigaci{\'o}n en Ingenier{\'\i}a de
Arag{\'o}n, Universidad de Zaragoza, Spain. Research supported in
part by Spanish Ministry of Science and Innovation grant
PID2019-104358RB-I00  and by the Science dept. of Aragon Government:
Group Reference T64$\_$20R (COSMOS research group).}}
    \date{\today}

    \maketitle

    \begin{abstract}
Effective dimension has proven very useful in geometric measure
theory through the point-to-set principle \cite{LuLu18}\ that
characterizes Hausdorff dimension by relativized effective
dimension. Finite-state dimension is the least demanding
effectivization in this context \cite{FSD}\ that among other results
can be used to characterize Borel normality \cite{BoHiVi05}.

In this paper we prove a characterization of finite-state dimension
in terms of information content of a real number at a certain
precision. We then use this characterization to give a robust concept of relativized normality and prove a finite-state
dimension point-to-set principle. We finish with an open question on the equidistribution properties of relativized normality.

    \end{abstract}

\section{Introduction}

Effective dimension was introduced in \cite{DISS,DCC}\ as an
effectivization of Hausdorff dimension. One of its generalizations
is finite-state dimension \cite{FSD}\ that is a robust notion that
interacts with compression and characterizes Borel normality
\cite{BoHiVi05}.

In \cite{LuLu18}\  Lutz and Lutz proved a point-to-set principle
that characterizes Hausdorff dimension in terms of relativized
effective dimension. This principle has already produced a number of
interesting results in geometric fractal theory through
computability based proofs (see
\cite{LuLu20,LuMa21b}\ and a number of more recent results such as \cite{FiSt24}).

In this paper we provide a characterization of finite-state
dimension on  Euclidean space  based on the finite-state
information content of a real number at a certain precision, which
also provides an alternative characterization of Borel normality. This new characterization gives rise to a natural and robust relativization of finite state dimension with the strong property of a  finite-state
dimension point-to-set principle. We finish with open questions on the equidistribution properties of the corresponding relativized normality.

\section{Preliminaries}

Let $\Sigma$ be a finite alphabet. We write $\Sigma^*$ for the set
of all (finite) {\sl strings\/} over $\Sigma$ and $\Sigma^{\infty}$
for the set of all (infinite) {\sl sequences\/} over $\Sigma$. We
write $|x|$ for the length of a string or sequence $x$, and we write
$\lambda$ for the {\sl empty string}, the string of length 0. For
$x\in \Sigma^*\cup\Sigma^{\infty}$ and $0\le n<|x|$, we write
$x\upto n=x[0..n-1]$. For $w\in\Sigma^*$ and $x\in
\Sigma^*\cup\Sigma^{\infty}$, we say that $w$ is  a {\sl prefix\/}
of $x$, and we write $w\sqsubseteq x$, if $x\upto |w|=w$.

A {\sl $\Sigma$ finite-state transducer ($\Sigma$-FST)\/} is a
4-tuple $T = (Q, \delta, \nu, q_0)$, where
\begin{itemize}
\item $Q$ is a nonempty, finite set of {\it states},
\item ${\delta}:{Q \times \Sigma}\to{Q}$ is the {\it transition
function},
\item ${\nu}:{Q \times \Sigma}\to{\Sigma^{*}}$ is the {\it output
function}, and
\item $q_0 \in Q$ is the {\it initial state}.
\end{itemize}
For $q \in Q$ and $w \in \Sigma^{*}$, we define the {\it output}
from state $q$ on input $w$ to be the string $\nu(q, w)$ defined by
the recursion
\begin{align*}
&\nu(q, \lambda) = \lambda, \\
&\nu(q, wa) = \nu(q, w)\nu(\delta(q, w), a)
\end{align*}
for all $w \in \Sigma^{*}$ and $a \in \Sigma$.  We then define the
{\it output} of $T$ on input $w \in \Sigma^{*}$ to be the string
$T(w) = \nu(q_0, w)$.

For each integer $b\ge 1$ we let $\Sigma_b=\{0, 1, \ldots, b-1\}$ be
the alphabet of {\sl base-$b$\/}  {\sl digits}. We  use infinite
sequences over $\Sigma_b$ to represent real numbers in  $[0,1)$.
Each $S\in\Sigma_b^{\infty}$ is associated the real number
$\real_b(S)=\sum\limits_{i=1}^{\infty} S[i-1] b^{-i}$ and for each
$x\in [0,1)$, $\seq_b(x)$ is the infinite sequence $S$ that does not
finish with infinitely many $b-1$ and such that $x=\real_b(S)$.

   A set of real numbers   $A \subseteq
  [0,1) $  is represented by the set $$\seq_b(A) = \{\seq_b(\alpha) \mid
  \alpha \in A\}$$ of sequences. If $X\subseteq\Sigma_b^\infty$ then
  \[\real_b(X)=\{\real_b(x) \mid
  x \in X\}.\]

  We will denote with $\D_b$ the set of rational numbers that have
  finite representation in base $b$, that is,
  \[\D_b=\myset{q}{\seq_b(q)=w0^{\infty}, w\in\Sigma_b^{*}}.\] We will
  write $\real_b(w)$ for $\real_b(w0^{\infty})$ when
  $w\in\Sigma_b^{*}$.

  \section{An euclidean characterization of finite-state dimension and Borel normality}

  Finite-state dimension was introduced in \cite{FSD}\ on the space of infinite sequences over a finite alphabet. The original definition in terms of
  gambling was proven robust by several characterizations in terms of  information lossless compression \cite{FSD}\ and several versions of entropy \cite{BoHiVi05}.

  Here we present an alternative definition on the Euclidean space and then prove its equivalence with \cite{FSD}.

  \begin{definition}
  Let $T$ be a $\Sigma$-FST and let $w\in\Sigma^*$. The {\sl $T$-information
  content of $w$\/} is  \[\K^T(w)=\min\myset{|\pi|}{T(\pi)=w}.\]
  \end{definition}

  \begin{definition}
  Let $T$ be a $\Sigma_b$-FST, $\delta>0$ and $x\in[0,1)$.  The {\sl base-$b$ $T$-information
  content of $x$ at precision $\delta$\/} is  \[\K^T_{\delta}(x)=\min\myset{\K^T(w)}{|\real_b(w)-x|<\delta}.\]
  \end{definition}

  We next define the finite-state dimension of points and sets.

    \begin{definition}
  Let $b\ge 1$. Let  $x\in[0,1)$ and $A\subseteq[0,1)$.  The {\sl base-$b$ finite-state dimension of $x$\/}
  is
  \[\dimfs^b(x)=\inf_{T \Sigma_b-\FST}\liminf_{\delta>0}\frac{\K^T_{\delta}(x)}{\log_b(1/\delta)},\]
  the {\sl base-$b$ finite-state dimension of $A$\/} is
  \[\dimfs^b(A)=\inf_{T \Sigma_b-\FST}\sup_{x\in A}\liminf_{\delta>0}\frac{\K^T_{\delta}(x)}{\log_b(1/\delta)}.\]
  \end{definition}

\begin{observation}
$\dimfs^b(x)=\inf_{T
\Sigma_b-\FST}\liminf_{n}\frac{\K^T_{b^{-n}}(x)}{n}$.
\end{observation}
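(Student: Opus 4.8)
The plan is to prove the equality for each fixed $\Sigma_b$-$\FST$ $T$ and then take the infimum over all such $T$ on both sides. Fix $T$ and $x$ and abbreviate $a_n=\K^T_{b^{-n}}(x)$, reading $\liminf_{\delta>0}$ as $\liminf_{\delta\to 0^+}$. The whole statement is a Cauchy-condensation-type claim: that the $\liminf$ of $\K^T_\delta(x)/\log_b(1/\delta)$ already equals its value along the geometric subsequence $\delta=b^{-n}$. The one structural fact I would record first is that $\K^T_\delta(x)$ is non-increasing in $\delta$: if $\delta_1<\delta_2$ then $\{w:|\real_b(w)-x|<\delta_1\}\subseteq\{w:|\real_b(w)-x|<\delta_2\}$, so the minimum defining $\K^T_{\delta_1}(x)$ is taken over a smaller set, whence $\K^T_{\delta_1}(x)\ge\K^T_{\delta_2}(x)$. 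In particular $a_n$ is non-decreasing in $n$.

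The inequality $\liminf_{\delta>0}\K^T_\delta(x)/\log_b(1/\delta)\le\liminf_n a_n/n$ is immediate: since $b^{-n}\to 0$ and $\log_b(1/b^{-n})=n$, the numbers $a_n/n$ are exactly the values of $\K^T_\delta(x)/\log_b(1/\delta)$ along the subsequence $\delta=b^{-n}$, and a $\liminf$ along any sequence tending to $0$ is at least the $\liminf$ taken over all $\delta\to 0^+$.

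For the reverse inequality I would, given an arbitrary $\delta$, set $n=\lfloor\log_b(1/\delta)\rfloor$, so that $b^{-(n+1)}<\delta\le b^{-n}$ and hence $n<\log_b(1/\delta)\le n+1$. Monotonicity gives $\K^T_\delta(x)\ge\K^T_{b^{-n}}(x)=a_n$, while the denominator satisfies $\log_b(1/\delta)<n+1$; since all quantities are nonnegative these combine to
\[
\frac{\K^T_\delta(x)}{\log_b(1/\delta)}\ \ge\ \frac{a_n}{n+1}\ =\ \frac{a_n}{n}\cdot\frac{n}{n+1}.
\]
The right-hand side depends on $\delta$ only through $n=n(\delta)$, which runs through all sufficiently large integers as $\delta\to 0^+$, so its $\liminf$ over $\delta$ equals $\liminf_n a_n/(n+1)$; and because the factor $n/(n+1)\to 1$ multiplies the nonnegative sequence $a_n/n$, this last $\liminf$ equals $\liminf_n a_n/n$. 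Taking $\liminf_{\delta\to 0^+}$ of the displayed bound therefore yields $\liminf_{\delta>0}\K^T_\delta(x)/\log_b(1/\delta)\ge\liminf_n a_n/n$. Together with the easy direction this gives the pointwise identity, and applying $\inf_T$ to both sides finishes the proof.

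This is a routine monotone-interpolation argument, so I do not anticipate a genuine obstacle. The only place demanding care is the denominator: $\log_b(1/\delta)$ is not equal to $n$ but lies in $(n,n+1]$, and one must verify that this $O(1)$ slack is harmless in the ratio — precisely the role played by the factor $n/(n+1)\to 1$ above. I would also make explicit at the outset that $\liminf_{\delta>0}$ means $\liminf_{\delta\to 0^+}$, since the sandwiching implicitly relies on that reading.
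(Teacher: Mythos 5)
Your argument is correct: the paper states this as an observation without proof, and your monotone-interpolation argument (monotonicity of $\K^T_\delta(x)$ in $\delta$, sandwiching $\delta$ between consecutive powers of $b$, and absorbing the factor $n/(n+1)\to 1$) is exactly the routine verification being left implicit. The only blemish is the typo ``$n<\log_b(1/\delta)\le n+1$'' where the floor convention gives $n\le\log_b(1/\delta)<n+1$, but you use the correct bound in the actual estimate, so nothing is affected.
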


The definition of finite-state dimension from \cite{FSD}\ is usually
done in a space of infinite sequences, while identifying $[0,1)$ and
$\Sigma_b^{\infty}$ through $\seq_b$ or base-$b$ representation.

Doty and Moser \cite{DotMos06}\ proved that finite dimension on
sequences can be characterized in terms of finite-state transducers.

\begin{theorem}[\cite{DotMos06}]
Let $S\in\Sigma^{\infty}$,
\[\dimfs(S)=\inf_{T \Sigma-\FST}\liminf_{n}\frac{\K^T(S\upto
n)}{n}.\]
\end{theorem}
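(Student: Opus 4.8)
The plan is to prove the identity as two inequalities, writing $R(S)=\inf_{T}\liminf_n \K^T(S\upto n)/n$ for the right-hand side and $k=|\Sigma|$, and using as a stepping stone the information-lossless finite-state compression characterization of $\dimfs$ from \cite{FSD}: $\dimfs(S)=\inf_C \liminf_n |C(S\upto n)|/n$, where $C$ ranges over information-lossless $\Sigma$-FSTs and all lengths are counted in $\Sigma$-symbols.

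For the inequality $R(S)\le\dimfs(S)$ I would start from an information-lossless compressor $C$ whose compression ratio on $S$ is within $\epsilon$ of $\dimfs(S)$ and build a decompressor $T$ as a finite-state decoder for $C$. Since $C$ is information-lossless, the encoding $w\mapsto C(w)$ is injective with bounded decoding delay, so a $\Sigma$-FST $D$ with $D(C(w))=w$ (up to an $O(1)$ tail) exists by the classical inverse construction for information-lossless transducers. Taking $T=D$, the string $C(S\upto n)$ is a program for $S\upto n$, whence $\K^T(S\upto n)\le |C(S\upto n)|+O(1)$; passing to $\liminf_n$ and then to the infimum over $C$ gives $R(S)\le\dimfs(S)$. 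The only care needed here is the $O(1)$ boundary effect of the decoder, which vanishes after dividing by $n$.

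The reverse inequality $\dimfs(S)\le R(S)$ is the substantial direction. Fixing an FST $T$ with $\liminf_n \K^T(S\upto n)/n=s$, I must build, for each $s'>s$, a finite-state $s'$-gale that succeeds on $S$. The guiding idea is that a short program under $T$ should translate into betting capital: strings with small $T$-information content form a small set on which a gambler can concentrate its money. The crucial quantitative input is a counting bound that uses only that $T$ is deterministic with finitely many states, namely that the number of length-$n$ strings $w$ with $\K^T(w)\le m$ is at most the number of input programs of length $\le m$, and hence $O(k^m)$. Applying this at the scale $m\approx sn$ shows that the length-$n$ prefixes of low $T$-complexity lie in a set of size $O(k^{sn})$, so a gambler that spreads its capital over that set outgrows the $s'$-gale tax for every $s'>s$ and keeps unbounded capital along $S$.

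The step I expect to be the main obstacle is precisely the realization of this gambler as a genuine \emph{finite-state} gale. Neither the map $w\mapsto\K^T(w)$ nor the low-complexity set is finite-state computable, so one cannot simply bet on ``the set of strings with short programs''. The resolution must instead exploit $T$ intrinsically---having the finite-state gambler simulate $T$ on a bounded reservoir of candidate programs and bet according to the next symbols those runs predict, or equivalently first converting the decompressor into an information-lossless compressor of the same asymptotic ratio and invoking \cite{FSD}. Making the counting bound and the finite-state bookkeeping cooperate so that the constructed gale is both valid and finite-state is where the real work lies; once that is done, letting $s'\downarrow s$ and optimizing over $T$ yields $\dimfs(S)\le R(S)$, and with the first inequality the claimed equality follows.
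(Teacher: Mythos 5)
First, note that the paper does not prove this statement at all: it is quoted from \cite{DotMos06} as a known characterization, so there is no in-paper argument to compare yours against. Judged on its own terms, your proposal has one essentially sound half and one genuine gap. The direction $R(S)\le\dimfs(S)$ is basically right, with a caveat you should make explicit: an information-lossless compressor in the sense of \cite{FSD} (injectivity of $w\mapsto(C(w),\text{final state})$) is not automatically invertible by a finite-state transducer; you need the stronger ``information lossless of finite order'' (bounded decoding delay) property, and then invoke the standard fact that restricting to such compressors does not change the optimal compression ratio. With that, $C(S\upto n)$ is a $D$-program for all but a bounded tail of $S\upto n$ and the inequality follows as you say.

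The direction $\dimfs(S)\le R(S)$ is where your argument stops being a proof. You correctly observe that the counting bound (at most $O(k^{sn})$ strings of length $n$ have a $T$-program of length $\le sn$) cannot be used directly, because neither $\K^T$ nor the low-complexity set is finite-state decidable, and you then explicitly defer the resolution (``where the real work lies''). That deferred step \emph{is} the theorem; naming the obstacle is not overcoming it. The missing idea is a block construction: fix a block length $l$, extend $\K^T$ to programs run from an arbitrary state $q$ of $T$, and build a compressor that encodes each length-$l$ block of the input by (i) the state of $T$ at which the previous segment of an optimal program left off (constant cost per block) and (ii) a self-delimited shortest $T$-program from that state for the block, falling back to a literal encoding when no short program exists. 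For fixed $l$ this is a finite lookup table, hence realizable by a finite-state (and information-lossless) transducer; splitting an optimal program for $S\upto n$ at output block boundaries shows the total output length is at most $\K^T(S\upto n)+O(n\log l/l)$, and letting $l\to\infty$ gives compression ratio, hence finite-state dimension, at most $\liminf_n\K^T(S\upto n)/n$. Without this (or an equivalent finite-state gale built from the same block tables), your proof of the substantial inequality is incomplete.
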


We next show that the notion of information
  content  at a certain precision characterizes finite-state
  dimension.

\begin{theorem}For each $b\ge 1$, $x\in[0,1)$, and $A\subseteq[0,1)$
\[\dimfs^b(x)=\dimfs(\seq_b(x)),\]
\[\dimfs^b(A)=\dimfs(\seq_b(A)).\]
\end{theorem}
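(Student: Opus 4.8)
The plan is to prove the two inequalities $\dimfs^b(x)\le\dimfs(\seq_b(x))$ and $\dimfs(\seq_b(x))\le\dimfs^b(x)$ separately, working throughout with the Observation form $\dimfs^b(x)=\inf_T\liminf_n \K^T_{b^{-n}}(x)/n$ and the Doty--Moser form $\dimfs(\seq_b(x))=\inf_T\liminf_n \K^T(\seq_b(x)\upto n)/n$. The set statement will follow by taking $\sup_{x\in A}$ on both sides, using the analogous $\inf_T\sup$ characterization of finite-state dimension for sets of sequences and checking that each transducer transformation below is uniform in $x$.

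For the easy inequality, fix any $\Sigma_b$-FST $T$. The no-trailing-$(b-1)$ convention gives $0\le x-\real_b(\seq_b(x)\upto n)<b^{-n}$, so $w=\seq_b(x)\upto n$ is an admissible witness in the definition of $\K^T_{b^{-n}}(x)$, whence $\K^T_{b^{-n}}(x)\le \K^T(\seq_b(x)\upto n)$. Dividing by $n$, taking $\liminf_n$, and then $\inf_T$ yields $\dimfs^b(x)\le\dimfs(\seq_b(x))$; since the same $T$ is used for every point, this also gives the set inequality $\dimfs^b(A)\le\dimfs(\seq_b(A))$.

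The reverse inequality is where the real work lies. Fix a near-optimal $T$ for the precision definition and, for each $n$, an optimal witness $w_n$ with program $\pi_n$, so $T(\pi_n)=w_n$, $|\real_b(w_n)-x|<b^{-n}$, and $|\pi_n|=\K^T_{b^{-n}}(x)$. The first step is a \emph{carry lemma}: comparing the $n$-digit truncations of $\real_b(w_n)$ and $x$ by the triangle inequality shows that $w_n\upto n$ and $\seq_b(x)\upto n$, read as base-$b$ integers, differ by at most $1$. Hence they agree on a common prefix $p_n$, differ in a single carry digit, and their remaining digits form a block that is a run $0^{k_n}$ in one and $(b-1)^{k_n}$ in the other (the $b$-adic case, where $x$ has a finite expansion, is handled separately: there an exact finite representation gives $\K^T_{b^{-n}}(x)=O(1)$, so both dimensions are $0$). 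The second step converts a program for $w_n$ into a program for $\seq_b(x)\upto n$ of essentially the same length. I will build a single transducer $T'$ that runs in two phases separated by marker symbols inserted into the program: in the first phase it copies the output of $T$ (reproducing $p_n$), at the marker it emits the corrected carry digit literally, and in the second phase it simulates $T$ but \emph{complements} every output digit $d\mapsto(b-1)-d$, turning the run produced by $w_n$ into the complementary run demanded by $\seq_b(x)$. Chopping $\pi_n$ at the program positions corresponding to these output boundaries and absorbing the inevitable $O(1)$ misalignment (each transition emits a bounded number of digits) together with the markers into an additive constant yields $\K^{T'}(\seq_b(x)\upto (n\pm O(1)))\le \K^T_{b^{-n}}(x)+O(1)$. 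Dividing by $n$ and taking $\liminf_n$, the constants wash out, giving $\dimfs(\seq_b(x))\le\liminf_n\K^T_{b^{-n}}(x)/n$, and $\inf_T$ finishes the pointwise claim.

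The main obstacle is exactly this second step. A finite-state transducer cannot count to the run position $n-k_n$, so the switch between the copy and complement phases cannot be detected from the output; I circumvent this by marking the phase boundary inside the \emph{program}, where a single symbol suffices and is detectable online. The complement gadget is forced because, for a fixed transducer, a run of length $k$ of one symbol need not have the same information content as a run of the other symbol, so $T$ cannot simply be reused on the run; complementing makes its handling of the two runs symmetric. The remaining care is bookkeeping: verifying that $p_n$ and the run are each produced by contiguous chunks of $\pi_n$ up to bounded boundary effects, that the lone carry digit costs only $O(1)$ literal information, and that the map $T\mapsto T'$ depends only on $T$ and not on $x$, so that the same construction serves every point of $A$ simultaneously and the pointwise inequalities lift to $\dimfs^b(A)=\dimfs(\seq_b(A))$.
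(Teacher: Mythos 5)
Your overall strategy---reduce everything to the carry lemma and then convert a $T$-program for the witness $w_n$ into a $T'$-program for $\seq_b(x)\upto n$---engages the right difficulty, and the carry lemma itself is correct. The genuine gap is the phase marker. The transducers over which $\dimfs(\seq_b(x))$ is infimized are $\Sigma_b$-FSTs, so the program alphabet is exactly $\Sigma_b$: there is no spare symbol to ``insert into the program,'' and a finite-state machine cannot otherwise locate the unbounded position in $\pi_n$ at which the copy phase must end. The standard repairs all cost more than the $O(1)$ you claim: an escape/doubling encoding of the phase-1 block (or re-encoding it over a sub-alphabet so as to reserve a marker) multiplies its length by a constant factor greater than $1$, which destroys the bound, while the block-flag encoding (cut $\alpha$ into blocks of length $\ell$ and prepend to each block a short flag saying whether the phase boundary falls inside it and at which offset) inflates the program only by a factor $1+O((\log_b\ell)/\ell)$. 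The latter works, but it turns your additive $O(1)$ into a multiplicative $(1+\epsilon)$, forces you to use a family of transducers $T'_\ell$ and send $\ell\to\infty$ under the $\inf_T$, and is needed a second time to terminate phase 2 after exactly $n$ output digits (running $T$ verbatim on the tail of $\pi_n$ and complementing would emit the complemented continuation of $w_n$ past position $n$, whereas $\K^{T'}$ requires the output to equal $\seq_b(x)\upto n$ exactly). A simplification worth adopting once you pay for the phase switch: you do not need the complement gadget or the tail of $\pi_n$ at all, since the run $0^{k_n}$ or $(b-1)^{k_n}$ can be emitted at $C$ output symbols per program symbol for any constant $C$, contributing only $k_n/C\le n/C$ to the program length, which vanishes under the infimum.

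For comparison, the paper avoids program surgery entirely: it first proves $\dimfs(S)=\dimfs(\mathrm{comp}(S))$, then renormalizes each witness $w_i$ to its true precision $m_i$ and argues that the very same programs $\pi_i$ serve either $T$ against prefixes of $S$ or the output-complemented transducer $T'(\pi)=\mathrm{comp}(T(\pi))$ against prefixes of $\mathrm{comp}(S)$, according to whether the approximation is from below or from above. That argument is far terser than yours and leaves the above/below dichotomy largely implicit; your carry analysis is more explicit about what the witness string actually looks like, but you must still close the marker gap before the construction goes through.
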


\begin{proof}
Let $x\in[0,1)$, let $S=\seq_b(x)$. Then for every $n\in \N$ and $T$
$\Sigma_b$-FST, $\K^T_{b^{-n}}(x)\le \K^T(S\upto (n+1))$ and
therefore $\dimfs^b(x)\le \dimfs(S)$.

For each $w\in \Sigma_b^{*}\cup\Sigma_b^{\infty}$, let $comp(w)$ be
the complementary of $w$, that is, $comp(w)[i]=b-1-w[i]$ for $0\le
i<|w|$.

\begin{claim}
$\dimfs(S)=\dimfs(comp(S))$.
$\dimfs^b(x)=\dimfs^b(\real_b(comp(\seq_b(x))))$.
\end{claim}

\begin{claim}
$\dimfs(S)\le \dimfs^b(x)$.
\end{claim}

To prove this claim, notice that $\dimfs^b(x)$ needs to be witnessed
either by approximations from above or for approximations from below
and that tighter approximations can be delayed.

That is, for every FST $T$ there exist and infinitely many $n_i$
such that
\[\lim_{i}\frac{\K^T_{b^{-n_i}}(x)}{n_i}= \liminf_n
\frac{\K^T_{b^{-n}}(x)}{n}.\] For each $i$ let $w_i$ be such that
$\K^T(w_i)=\K^T_{b^{-n_i}}(x)$ and $|x-\real_b(w_i)|<b^{-n_i}$. Let
$m_i$ be such that $b^{-m_i-1}\le |x-\real_b(w_i)|<b^{-m_i}\le
b^{-n_i}$. Then $\frac{\K^T_{b^{-m_i}}(x)}{m_i}\le
\frac{\K^T_{b^{-n_i}}(x)}{n_i}$,
 and for $T'(\pi)=comp(T(\pi))$, either
\[\liminf_{n}\frac{\K^T(S\upto n)}{n}\le \lim_i\frac{\K^T_{b^{-m_i}}(x)}{m_i}\] or
\[\liminf_{n}\frac{\K^{T'}(comp(S)\upto n)}{n}\le \lim_i\frac{\K^{T}_{b^{-m_i}}(x)}{m_i}.\]

Therefore either $\dimfs(S)\le \dimfs^b(x)$ or $\dimfs(comp(S))\le
\dimfs^b(x)$ and the claim follows.

\end{proof}

Since finite-state dimension in the space of sequences characterizes Borel normality
\cite{BoHiVi05}, we have an alternative characterization of normality in terms of finite-state dimension in the Euclidean space.

\begin{corollary}
Let $b\ge 1$, $x\in[0,1)$. $x$ is $b$-normal if and only if
$\dimfs^b(x)=1$, that is,
\[\inf_{T
\Sigma_b-\FST}\liminf_{\delta>0}\frac{\K^T_{\delta}(x)}{\log_b(1/\delta)}=1.\]
\end{corollary}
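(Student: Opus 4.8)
The plan is to derive the corollary directly from the Theorem just proved together with the classical equivalence between Borel normality and finite-state dimension one for sequences. By the Theorem we have $\dimfs^b(x)=\dimfs(\seq_b(x))$, so the assertion $\dimfs^b(x)=1$ is literally the assertion $\dimfs(\seq_b(x))=1$; it therefore suffices to characterize the latter, and the displayed equality is then obtained simply by unfolding the definition of $\dimfs^b(x)$.

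First I would recall that $x$ is $b$-normal exactly when its canonical base-$b$ expansion $\seq_b(x)$ is a normal sequence over $\Sigma_b$, i.e.\ for every $k\ge 1$ each block $w\in\Sigma_b^{k}$ occurs in $\seq_b(x)$ with limiting frequency $b^{-k}$. It is convenient here that $\seq_b(x)$ is chosen as the expansion not ending in infinitely many $b-1$, so that normality of the real $x$ and normality of the sequence $\seq_b(x)$ coincide and no ambiguity arises at the finitely-representable rationals $\D_b$.

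Next I would invoke the known characterization that a sequence $S\in\Sigma_b^{\infty}$ is normal if and only if $\dimfs(S)=1$ \cite{BoHiVi05}. Taking $S=\seq_b(x)$ and combining with the Theorem yields the chain $x \text{ is } b\text{-normal} \iff \seq_b(x) \text{ is normal} \iff \dimfs(\seq_b(x))=1 \iff \dimfs^b(x)=1$, which is exactly the corollary.

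Since all the substantive content is imported, there is essentially no obstacle beyond this bookkeeping. The only point deserving a sentence of care is the sequence-level equivalence itself: the implication normal $\Rightarrow$ dimension one is immediate, as the dimension is always at most $1$ and a normal sequence admits no finite-state compression beating that bound, whereas the reverse implication is the substantive half, resting on the fact that any persistent deviation of the block frequencies from uniform is exploitable by a finite-state gambler to force exponential savings and hence $\dimfs(S)<1$. As this is precisely what \cite{BoHiVi05} supplies, I would cite it rather than reprove it, and note in passing that the corollary is the promised alternative characterization of Borel normality via information content at a precision.
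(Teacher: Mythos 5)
Your proposal is correct and follows the same route the paper intends: the corollary is stated without proof as an immediate consequence of the preceding theorem $\dimfs^b(x)=\dimfs(\seq_b(x))$ combined with the characterization of normality by finite-state dimension one from \cite{BoHiVi05}. Your additional remarks on the choice of expansion at points of $\D_b$ and on which direction of the sequence-level equivalence is substantive are accurate but not load-bearing.
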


\section{Point to set principle for finite-state dimension}

We denote as {\sl separator\/} a set $S\subseteq [0,1)$ such that
$S$ is countable and dense in [0,1].

\begin{definition}
A {\sl separator enumerator (SE)\/} is a function $f:\Sigma^{*}\to
[0,1)$ such that $\mathrm{Im}(f)$ is a separator.
\end{definition}

For each  separator enumerator $f$ we can define information content in
$[0,1)$ relative to $f$.

 \begin{definition}
 Let $f:\Sigma^{*}\to
[0,1)$ be a SE.  Let $T$ be a $\Sigma$-FST, $\delta>0$ and
$x\in[0,1)$.  The {\sl $f$-$T$-information
  content of $x$ at precision $\delta$\/} is  \[\K^{T,f}_{\delta}(x)=\min\myset{\K^T(w)}{|f(w)-x|<\delta}.\]
  \end{definition}

   \begin{definition}
  Let $f:\Sigma^{*}\to
[0,1)$ be a SE. Let  $x\in[0,1)$ and $A\subseteq[0,1)$.  The {\sl
$f$-enumerator finite-state dimension of $x$\/}
  is
  \[\dimfs^f(x)=\inf_{T \Sigma-\FST}\liminf_{\delta>0}\frac{\K^{T,f}_{\delta}(x)}{\log_{|\Sigma|}(1/\delta)},\]
  the {\sl $f$-enumerator finite-state dimension of $A$\/} is
  \[\dimfs^f(A)=\inf_{T \Sigma-\FST}\sup_{x\in A}\liminf_{\delta>0}\frac{\K^{T,f}_{\delta}(x)}{\log_{|\Sigma|}(1/\delta)}.\]
  \end{definition}

We can generalize Borel normality through the same relativization.

\begin{definition}
Let $f:\Sigma^{*}\to [0,1)$ be a SE, let  $x\in[0,1)$. $x$ is {\sl
$f$-normal\/} if $\dimfs^f(x)=1$.
\end{definition}

  Given this natural relativization of finite-state dimension we next prove a point-to-set principle stating that for every set $A$ there exists an SE $f$ such that classical Hausdorff dimension of $A$ is exactly $f$-finite-state dimension. This implies that classical geometrical measure theory results can be obtained using only finite-state dimension.

\begin{theorem}
Let $A\subseteq[0,1)$.
\[\dimh(A)=\min_{f \mathrm{SE}}\dimfs^f(A),\]
\[\dimh(A)=\min_{f:\{0, 1\}^{*}\to \D_2 }\dimfs^f(A).\]

\end{theorem}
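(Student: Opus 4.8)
The plan is to prove the two equalities by establishing matching upper and lower bounds, and to observe that the second statement is essentially a specialization of the first once we verify that $\D_2$ provides a legitimate separator. For the first equality, the two inequalities $\dimh(A)\le\dimfs^f(A)$ (for every separator enumerator $f$) and $\dimh(A)\ge\dimfs^f(A)$ (for some carefully chosen $f$) require genuinely different ideas, and I expect the lower bound on Hausdorff dimension to be the routine direction while the construction of the optimal $f$ is where the real work lies.

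First I would establish $\dimh(A)\le\dimfs^f(A)$ for \emph{every} SE $f$. Fix $f$ and a $\Sigma$-FST $T$, and let $s>\liminf_{\delta}\K^{T,f}_\delta(x)/\log_{|\Sigma|}(1/\delta)$ uniformly for $x\in A$ (i.e.\ $s>\dimfs^f(A)$). For each $x\in A$ there are arbitrarily small $\delta$ with a string $w$ satisfying $|f(w)-x|<\delta$ and $\K^{T,f}_\delta(x)=\K^T(w)<s\log_{|\Sigma|}(1/\delta)$. The point $f(w)$ lies in a ball of radius $\delta$ around $x$, and the number of strings $w$ with $\K^T(w)\le s\log_{|\Sigma|}(1/\delta)$ is at most $|\Sigma|^{s\log_{|\Sigma|}(1/\delta)}=\delta^{-s}$ (since a program of length $\ell$ over $\Sigma$ yields at most $|\Sigma|^\ell$ outputs). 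This gives, at each scale, a cover of $A$ by at most $\delta^{-s}$ balls of radius $\delta$, so the $s$-dimensional Hausdorff content stays bounded and $\dimh(A)\le s$. Letting $s\downarrow\dimfs^f(A)$ finishes this direction; the only care needed is the standard liminf-to-cover bookkeeping, choosing for each $x$ a sequence of scales and aggregating over $A$.

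The harder direction is to construct a single $f^*$ (valued in $\D_2$, giving the second equality for free) with $\dimfs^{f^*}(A)\le\dimh(A)$. The idea is to read off an optimal cover of $A$. Fix $s>\dimh(A)$; then for every $\epsilon$ there is a cover of $A$ by balls $B_j$ of radii $r_j$ with $\sum_j r_j^s<\epsilon$. I would enumerate the centers of such covers (taken at a sequence of finer and finer content thresholds, with centers snapped to dyadic rationals in $\D_2$) and encode them as the range of $f^*$, arranging the encoding so that a ball of radius roughly $r$ in an $s$-economical cover is addressed by a string of length about $s\log_2(1/r)$. Concretely, $f^*$ should map short strings to the centers of large balls and longer strings to centers of smaller balls, respecting the budget $\sum r_j^s<\epsilon$ so that at scale $\delta$ only $\approx\delta^{-s}$ addresses are needed. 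Then the \emph{identity} transducer already witnesses $\K^{f^*}_\delta(x)\lesssim s\log_2(1/\delta)$ for every $x\in A$, giving $\dimfs^{f^*}(A)\le s$ and hence the bound after $s\downarrow\dimh(A)$.

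\textbf{The main obstacle} is that Hausdorff dimension is defined through covers at \emph{all sufficiently small} scales simultaneously, whereas $\dimfs^f$ is a single liminf over $\delta$; so a single $f^*$ must simultaneously encode economical covers at every scale, not just one. The resolution is to build $f^*$ from a nested (or diagonalized) family of optimal covers indexed by a decreasing sequence $s_k\downarrow\dimh(A)$ and $\epsilon_k\to 0$, interleaving their address spaces by length so that for each $x\in A$ the liminf over $\delta$ is controlled by the right cover at each scale. I would take care that the enumeration $f^*$ has dense range in $[0,1)$ (adding a fixed dense base set of dyadics to guarantee the separator condition) without disturbing the length budget, and that the snapping to $\D_2$ changes each center by at most a lower-order amount relative to $r_j$, so the covering estimates survive. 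With the address-length-to-radius correspondence set up correctly, both the "for every $f$" upper bound on $\dimh$ and the "for some $f^*$" lower bound combine to yield $\dimh(A)=\min_{f}\dimfs^f(A)$, with the minimum attained by the dyadic-valued $f^*$.
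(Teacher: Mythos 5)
Your argument is correct in outline, but it takes a genuinely different route from the paper. The paper's proof of the attainment direction is a one-line reduction to the Lutz--Lutz point-to-set principle: it picks an oracle $C$ with $\dimh(A)=\dim^C(A)$ and sets $f=\real_2\circ U^C$, so that for the identity transducer the $f$-information content of $x$ at precision $\delta$ is exactly the $C$-relativized Kolmogorov complexity of $x$ at that precision, whence $\dimfs^f(A)\le\dim^C(A)=\dimh(A)$; the converse inequality (your first step) is left implicit there. You instead bypass oracles and algorithmic dimension entirely: your $f^*$ is an explicit address book for a diagonalized family of near-optimal Hausdorff covers, with codeword lengths assigned via Kraft's inequality from the budgets $\sum_j r_j^{s_k}<\epsilon_k$, and the header/interleaving device you describe does make the $k$-th cover's contribution $s_k\log_2(1/r)+O(k)$, which suffices for the liminf once the radii in the $k$-th cover are forced below, say, $2^{-k^2}$ so the header is lower order. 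What the paper's route buys is brevity and a canonical minimizer (a universal oracle machine composed with $\real_2$), which makes transparent the moral that the point-to-set oracle is really an enumerator of a dense set; what your route buys is a self-contained, purely measure-theoretic proof that does not presuppose the point-to-set principle or the precision characterization of $\dim^C$, and that actually supplies the $\dimh(A)\le\dimfs^f(A)$ direction explicitly via program counting, a step the paper omits. Both approaches obtain the $\D_2$-valued second statement the same way, by snapping outputs to dyadic rationals with error negligible relative to the relevant radius.
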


\begin{proof}
Let $C$ be such that $\dimh(A)= \dim^C(A)$ from the point-to-set
principle in \cite{LuLu18}. Le $U$ be the universal oracle Turing
Machine used in the definition of Kolmogorov Complexity for
effective dimension $\dim$. Let $h: \{0, 1\}^{*}\to \{0, 1\}^{*}$ be
such that $h(w)=U^C(w)$ when $U^C(w)$ is defined, and $U^C(w)=0$
otherwise. Then $f(w)=\real_2(h(w))$ is the required SE.
\end{proof}

Notice that the previous theorem holds even when fixing a particular
countable dense set. In terms of Borel normality, it shows that
reordering the set $\D_b$ of base-$b$ finite representation numbers
is enough to obtain normality for any other base.

\section{Conclusions and open questions}
We expect that our main theorem will prove new lower
bounds on Hausdorff dimension in different settings. Notice that the
result can be directly translated into any separable metric space and
any reasonable gauge family.

Our result helps clarify the oracle role in the point to set
principles. The next step would be to classify the different
enumerations of a countable dense set.

We believe that the notion of $f$-normal sequence can be of independent
interest with robustness properties inherited from those of the original concept, for instance from the fact that $x$ is $b$-normal exactly when the sequence $(b^nx)_n$ is equidistributed modulo 1.

\begin{Open}
    Let $f:\Sigma^{*}\to [0,1)$ be a SE, let  $x\in[0,1)$. For each $n\in\N$, let $a_n(x)= f(w)$ for $|w|\le n$ such that $f(w)\le x$ and  $x-f(w)$ is minimum.
    Can we characterize $f$-normality in terms of the equidistribution properties of $(|\Sigma|^n a_n(x))$?
    \end{Open}

\bibliographystyle{abbrv}
\bibliography{todo}

  \end{document}